\newcommand{\RNum}[1]{\uppercase\expandafter{\romannumeral #1\relax}}
\newtheorem{theorem}{Theorem}
\newtheorem{lemma}{ Lemma}
\newtheorem{corollary}{Corollary}
\newtheorem{definition}{Definition}
\begin{document}

\title{Gain without Pain: Offsetting DP-injected Nosies Stealthily in Cross-device Federated Learning}

\author{Wenzhuo Yang, Yipeng Zhou, Miao Hu, Di Wu, James Xi Zheng, Hui Wang and Song Guo 
\thanks{Wenzhuo Yang, Miao Hu and Di Wu are with the School of Computer Science and Engineering, Sun Yat-sen University, Guangzhou, China (e-mail: yangwzh8@mail2.sysu.edu.cn, humiao5@mail.sysu.edu.cn, wudi27@mail.sysu.edu.cn).}
\thanks{
Yipeng Zhou and James Xi Zheng are with the Department of Computing, FSE, Macquarie University, Sydney, Australia (e-mail: \{yipeng.zhou, xi.zheng\}@mq.edu.au).}
\thanks{Hui Wang is with the Institute for Network Sciences and Cyberspace, Tsinghua University, Beijing 100084, China, and also with the Beijing National Research Center for Information Science and Technology, Beijing 100084, China (e-mail: jessiewang@tsinghua.edu.cn).}
\thanks{Song Guo is with the Department of Computing, The Hong Kong Polytechnic University, Hong Kong (email: cssongguo@comp.polyu.edu.hk).}

}

\maketitle

\thispagestyle{empty}
\pagestyle{empty}

\begin{abstract}
Federated Learning (FL) is an emerging paradigm through which decentralized devices can collaboratively train a common model. However, a serious concern  is the leakage of privacy from exchanged gradient information between clients and the parameter server (PS) in FL. To protect gradient information, clients can adopt differential privacy (DP) to add additional noises and distort original gradients before they are uploaded to the PS. Nevertheless, the model accuracy will be significantly impaired by DP noises, making DP impracticable in real systems. 
In this work, we propose a novel \underline{N}oise \underline{I}nformation \underline{S}ecretly \underline{S}haring (NISS) algorithm to alleviate the disturbance  of DP noises by sharing negated noises among clients.  We theoretically prove that: 1) If clients are trustworthy, DP noises can be perfectly offset on the PS; 2) Clients can easily distort negated DP noises to protect themselves in case that other clients are not totally trustworthy, though the cost lowers model accuracy. NISS is particularly applicable for FL across multiple IoT (Internet of Things) systems, in which all IoT devices need to collaboratively train a model. To verify the effectiveness and the superiority of the NISS algorithm, we conduct experiments with the MNIST and CIFAR-10 datasets. The experiment results verify our analysis and demonstrate that NISS can improve model accuracy by 21\% on average and obtain better privacy protection if clients are trustworthy.  

\end{abstract}
\begin{IEEEkeywords}
Federated Learning, Differential Privacy, Secretly Offsetting
\end{IEEEkeywords}
\IEEEpeerreviewmaketitle


\section{Introduction}
\IEEEPARstart{W}{ith} the remarkable development of IoT (Internet of Things) systems, IoT devices such as mobile phones, cameras and IIoT (Industrial IoT) devices have been widely deployed in our daily life \cite{atzori2010internet}\cite{gubbi2013internet}. On one hand, IoT devices with powerful computing and communication capacity are generating more and more data. On the other hand, to provide more intelligent services, decentralized IoT devices have motivation to collaborate via federated learning (FL) so that distributed data can be fully exploited for model training \cite{zhang2019deep}\cite{li2018contract}. 

The training process via FL can be briefly described as follows. In a typical FL system, a parameter server (PS) is deployed to aggregate gradients uploaded by clients, and distribute aggregated results back to clients \cite{kairouz2019advances, mcmahan2017communication, konevcny2016federated1, konevcny2016federated2}. The model training process terminates after exchanging the gradient information between clients and the PS for a certain number of rounds. However, it has been studied in \cite{melis2019exploiting, hitaj2017deep, zhu2019deep, zhao2020idlg} that it can lead to the leakage of user privacy if the gradient information is disclosed. In addition, the PS is not always trustworthy \cite{li2020federated, kairouz2019advances}, which  also possibly invades user privacy.

Recently, it has been extensively investigated by academia and industry to adopt differential privacy  (DP) on each client \cite{wei2020federated, 10.1561/0400000042, abadi2016deep, wu2020value, geyer2017differentially} so as to protect the gradient information. DP can distort original gradients by adding additional noises, which however unavoidably distorts the aggregated gradients on the PS and hence impairs the model accuracy \cite{wei2020federated}. It has been reported in the work \cite{wei2020federated, zhu2019deep} that DP noises can significantly lower model accuracy by $10\sim 30\%$. It implies that straightly implementing DP in real systems is impracticable when high model accuracy is required \cite{abadi2016deep}.

\begin{figure}[ht]
\centering  
\includegraphics[width=0.5\textwidth]{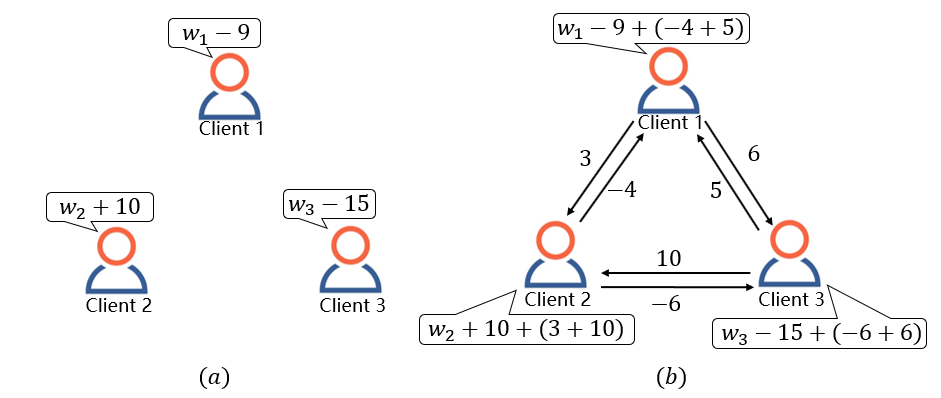}
\caption{A case with clients whose noises can be perfectly offset among themselves. Here $w_k$ represents the parameter for client $k$.}
\label{Fig:ideal}
\end{figure}

To alleviate the disturbance of DP noises on the aggregated gradients without compromising user privacy, we propose an algorithm to secretly offset DP noises.  
The idea of our work can be explained by the example shown in Fig.~\ref{Fig:ideal}. There are three clients and the model to be trained by these clients is represented by the parameter $w_k$ for client $k$. Each client distorts the original gradient information by adding a random number, as presented in Fig.~\ref{Fig:ideal}(a). We suppose that the random number is generated according to the Gaussian distribution determined by the client's privacy budget. However, the noise can be offset if a client can negate and split its noise into multiple shares, and distribute these negated shares with other clients, as presented in Fig.~\ref{Fig:ideal}(b). Each client uploads its gradients plus all noises (\emph{i.e.}, its own noises and negated noise shares from other clients) to the PS, and then these noises can be perfectly offset among themselves.

Inspired by this example,  we propose the \underline{N}oise \underline{I}nformation \underline{S}ecretly \underline{S}haring (NISS) algorithm through which clients can secretly share their noise information with each other. We theoretically prove that:  1) If clients are trustworthy, DP noises can be perfectly offset on the PS without compromising privacy protection; 2) Clients can easily distort  negated noise shares received from other clients in case that other clients are not totally trustworthy. We also investigate the extreme case that the PS colludes with other clients to crack the gradient information of a particular client. In this extreme case, there is a trade-off between model accuracy and privacy protection, and model accuracy cannot be improved without compromising privacy protection. However, we would like to emphasize that NISS is particularly applicable for FL across multiple IoT systems. IoT devices within the same system can trust each other to certain extent so that the model accuracy can be improved accordingly. Besides, devices within the same system can be connected with high speed networks so that the communication overhead caused by NISS is acceptable.

Our main contributions are summarized as below:
\par
\begin{itemize}
    \item We propose the NISS algorithm that can secretly offset noises generated by DP adopted by each client so that the disturbance on the aggregated gradients can be removed. 
    \item We theoretically prove that the DP noises can be perfectly offset if clients are trustworthy. Even if clients are not totally trustworthy, clients can still protect themselves by distorting the negated noise shares transmitted between clients. 
    \item At last, we conduct experiments with the MNIST and CIFAR-10 datasets, and the experiment results demonstrate that NISS algorithm can obtain better privacy protection and higher accuracy.
\end{itemize}
\par
The reminder of this paper is organized as follows. In Section \ref{Sec:Related}, we introduce relate work on FL, DP, and SMC. In Section \ref{Sec:Preliminaries}, we introduce the preliminary knowledge. In Section \ref{Sec:NISS}, we elaborate the NISS algorithm. In Section \ref{Sec:Analysis}, we present the analysis of noise offsetting and security. In Section \ref{Sec:Experiment}, we show the simulations, compare our scheme with other schemes and discuss the experimental results. Finally, we conclude the paper in Section \ref{Sec:Conclusion}.

\section{Related work}
\label{Sec:Related}
\subsection{Federated Learning (FL)}
FL, as a recent advance of distributed machine learning, empowers participants to collaboratively train a model under the orchestration of a central parameter server, while keeping the training data decentralized\cite{kairouz2019advances}. It was first proposed by Google in 2016 \cite{mcmahan2017communication}. During the training process, each participant's raw data is stored locally and will not be exchanged or transferred for training. FL has the advantages of making full use of IoT computing power with preserved user privacy. 
\par
The work in \cite{mcmahan2017communication} firstly proposed FedAVG, which is one of the most widely used model average algorithms in FL. The work in \cite{li2019convergence} analyzed the convergence rate of FedAVG with non-IID data simple distributions. The work \cite{kairouz2019advances} and \cite{li2020federated} showed a comprehensive introduction to the history, technical methods and unresolved problems in FL. The work in \cite{geyer2017differentially} proved that the bare FedAVG can protect the privacy of participants to some extent. However only exchanging gradients information still has a high risk of privacy leakage \cite{melis2019exploiting, hitaj2017deep, zhu2019deep, zhao2020idlg}.
Despite tremendous efforts contributed by prior works, there exist many issues in FL that have not been solved very well, such as inefficient communication and device variability \cite{li2020federated, yang2019federated, wei2020federated}.
\par

\subsection{Differential Privacy (DP)}
DP is a very effective mechanism for privacy preservation that can be applied in FL \cite{wei2020federated, abadi2016deep, geyer2017differentially, wu2020value}. It uses a mechanism to generate random noises that are added to query results so as to distort original values.  
\par
The most commonly used mechanism for adding noises to FL is the Gaussian mechanism. The work in \cite{abadi2016deep} investigated how to apply Gaussian mechanism in machine learning systems. Then the work in \cite{geyer2017differentially} studied how to use the Gaussian mechanism in FL. In \cite{wei2020federated}, a FedSGD with DP algorithm is proposed for FL systems and its convergence rate is analyzed.
The work in \cite{zhu2019deep} introduced a novel method named DLG to measure the level of privacy preservation in FL. In FL with DP, a higher $\epsilon$ implies a smaller variance of DP noises, and hence a lower level of privacy preservation.  Model accuracy can be largely affected by DP noises \cite{wei2020federated, zhu2019deep}. 
\par
In the field of IoT, FL with DP has also attracted a lot of attention recently. In \cite{briggs2020review}, the author surveys a wide variety of papers on privacy preserving methods that are crucial for FL in IoT. The work in \cite{zhao2020privacy} designed a FL system with DP leveraging the reputation mechanism to assist home appliance manufacturers to train a machine learning model based on customers’ data. The work in \cite{zhao2020local} proposed  to integrate FL and DP to facilitate crowdsourcing applications to generate a machine learning model.

Basically, there is a trade-off between the extent of privacy protection and model accuracy if DP is straightly incorporated into FL. Different from these works, we devise a novel algorithm through which clients can generate negatively correlated DP noises to get rid of the negative influence on model accuracy. 


\subsection{Secure Multi-party Computing (SMC)}
Other than DP, SMC is another effective way for privacy preservation in FL.  In previous studies, SMC has been used in many machine learning models \cite{agrawal2000privacy, du2004privacy, vaidya2002privacy, vaidya2008privacy}. At present, \textit{Secret Sharing (SS)} and \textit{Homomorphic Encryption (HE)} are two main ways in SMC to protect privacy in FL.

HE performs complicated computation operations on gradients. During the gradient aggregation and transmission, it is always calculated in an independent encryption space, instead of directly using the raw gradients value \cite{hardy2017private, aono2017privacy}. SS is a method to generate several shares for a secret and send them to several participants. As long as most of participants are present, the secret can be recovered. In FL, participants can add masks to their gradients and share their masks as a secret to others. If the PS can receive returns from a sufficient number of participants, the masks can be eliminated. Several works based on SS in FL have been proposed in \cite{bonawitz2017practical, mandal2018nike, xu2019verifynet}.

However, SS and HE consume too much computing resources, which prohibit their deployment in real world \cite{vepakomma2018no}. In fact, our work is a combination of SS and DP, but the computation overhead of our noise sharing scheme is very low. 
\par

\par

\section{Preliminaries}
\label{Sec:Preliminaries}

To facilitate the understanding of our algorithm, the list of main notations used in our work is presented in {Table \ref{symbols}}.
\begin{table}[htbp]
\centering
\caption{LIST OF SYMBOLS}\label{tab:aStrangeTable}
\begin{tabular}{cc}
\toprule  
Symbol & Meaning  \\
\midrule  
$K$ & The number of clients\\
$k$ & The index of clients\\
$t$ & The index of global training round\\
$E$ & The number of local training round\\
$\eta$ & The learning rate\\
$h$ & The dimension of the parameters\\
$\ell$ & The loss function\\
$\nabla \ell$ & The gradient of function $\ell$\\
$m$ & The number of clients in each global round\\
$d_k$ & The cardinality of $\mathcal{D}_k$\\
$p_k$ & The aggregation weight of client $k$\\
$\sigma^2$ & The unit noise variance\\
$\sigma_k^2$ & The Gaussian noise variance of client $k$\\

$\mathcal{N}$ & Gaussian Distribution\\
$\mathcal{D}_k$ & The dataset of client $k$\\
$\mathcal{M}_t$ & The client set of $m$ client in round t\\
$\mathbf{w}$ & The global model parameters\\
$\mathbf{w}^k$ & The local model parameters of client $k$\\
$\mathbf{n}$ & The noise generated by DP mechanism\\
$\mathbf{r}$ & The negated noise\\
$s$ & A random variables to distort $\mathbf{r}$\\
$\tau_k^2$ & The variance of $s$\\
$\mathbb{I}_h$ & The $h \times h$ identity matrix\\
$\epsilon,\delta$ & DP parameters\\
\bottomrule 
\end{tabular}
\label{symbols}
\end{table}

\subsection{Differential Privacy}
\label{Sec:Preliminaries.dp}
It was assumed that user privacy will not be leaked if only gradient information is disclosed. However, it was investigated in \cite{melis2019exploiting,hitaj2017deep,zhu2019deep,zhao2020idlg} that privacy information can be reconstructed through gradient information. Therefore, it was proposed in \cite{abadi2016deep} that clients can adopt DP to further disturb their gradient information by adding additional noises to their disclosed information. According to the prior work \cite{10.1561/0400000042}, an algorithm satisfying $(\epsilon, \delta)$-differential privacy is defined as follows.  
\begin{definition}
\label{dpdefinition}
{ A randomized mechanism $\mathcal{M}: \mathcal{X} \rightarrow \mathcal{R}$ with domain $\mathcal{X}$ and range $\mathcal{R}$ satisfies $(\epsilon,\delta)$-differentially privacy if for any two adjacent databases $\mathcal{D}_i, \mathcal{D}_i^{\prime} \in \mathcal{X}$ and for any subset of outputs $S \subseteq \mathcal{R}$, }
\begin{equation}
    \operatorname{Pr}[\mathcal{M}(\mathcal{D}_i) \in S] \leq e^{\varepsilon} \operatorname{Pr}\left[\mathcal{M}\left(\mathcal{D}_i^{\prime}\right) \in S\right]+\delta.
\end{equation}
\end{definition}
Here, $\epsilon$ is the privacy budget which is the distinguishable bound of all outputs on adjacent databases $\mathcal{D}_i$ and $\mathcal{D}_i^{\prime}$. $\delta$ represents the probabilities that two adjacent outputs of the databases $\mathcal{D}_i, \mathcal{D}_i^{\prime}$ cannot be bounded by $\epsilon$ after using Algorithm $\mathcal{M}$. 
$\epsilon$ is also called the privacy budget. 
Intuitively, a DP mechanism $\mathcal{M}$ with a smaller privacy budget $\epsilon$ has a stronger privacy protection and vice verse.  

\begin{theorem}
\label{The:GaussM}
(Gaussian Mechanism). Let $\epsilon \in (0,1)$ be arbitrary and $\mathcal{D}_i$ denote the database. For $c^2 > 2 ln(1.25/\delta)$, the Gaussian Mechanism $\mathcal{M} = f(\mathcal{D}_i)+ \mathcal{N}(0, \sigma^2)$ with parameter $\sigma \geq \frac{c \Delta f}{  \epsilon}$ is $(\epsilon,\delta)$-differentially private. Here, $f(\mathcal{D}_i)$ represents the original output and $\Delta f$ is the sensitivity of $f$ given by $\Delta f = \max_{\mathcal{D}_{i}, \mathcal{D}_{i}^{\prime}}\left\|f\left(\mathcal{D}_{i}\right)-f\left(\mathcal{D}_{i}^{\prime}\right)\right\|_2$.
\end{theorem}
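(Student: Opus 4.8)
The plan is to reproduce the classical privacy-loss argument for the Gaussian mechanism, following \cite{10.1561/0400000042}. First I would reduce the $h$-dimensional statement to a scalar one. Given adjacent databases $\mathcal{D}_i,\mathcal{D}_i^{\prime}$, set $v = f(\mathcal{D}_i)-f(\mathcal{D}_i^{\prime})$, so that $\|v\|_2 \le \Delta f$ by the definition of sensitivity. Since the spherical Gaussian $\mathcal{N}(0,\sigma^2\mathbb{I}_h)$ is rotation-invariant, the likelihood ratio of $\mathcal{M}(\mathcal{D}_i)$ to $\mathcal{M}(\mathcal{D}_i^{\prime})$ evaluated at an output point depends only on the component of that point along $v$; thus it suffices to analyze one coordinate, i.e. a scalar $\mathcal{N}(0,\sigma^2)$ compared against the same law shifted by $\|v\|_2 \le \Delta f$, and it is enough to treat the worst-case shift $\Delta f$.

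Next I would write the privacy loss random variable explicitly. With $x\sim\mathcal{N}(0,\sigma^2)$, the log-density ratio is
\[
L(x)=\ln\frac{\exp(-x^2/2\sigma^2)}{\exp(-(x+\Delta f)^2/2\sigma^2)}=\frac{2x\,\Delta f+(\Delta f)^2}{2\sigma^2}.
\]
The objective is the tail bound $\Pr[\,|L(x)|>\epsilon\,]\le\delta$. A standard lemma then converts this into $(\epsilon,\delta)$-differential privacy in the sense of Definition~\ref{dpdefinition}: restricted to the complement of the bad event $\{|L(x)|>\epsilon\}$ the multiplicative factor $e^{\epsilon}$ holds pointwise for every measurable $S$, while the bad event, having probability at most $\delta$, is absorbed by the additive slack $\delta$.

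The substance of the proof is the tail estimate. Rearranging $|L(x)|\le\epsilon$ shows it is implied by $x\le t:=\dfrac{\sigma^2\epsilon}{\Delta f}-\dfrac{\Delta f}{2}$ (the matching lower constraint on $x$ is weaker by symmetry of the Gaussian and can be dropped). I would then apply the Gaussian tail inequality $\Pr_{x\sim\mathcal{N}(0,\sigma^2)}[x>t]\le\frac{\sigma}{t\sqrt{2\pi}}\exp(-t^2/2\sigma^2)$, substitute $\sigma=c\,\Delta f/\epsilon$ so that $t=\Delta f\!\left(\tfrac{c^2}{\epsilon}-\tfrac12\right)$, and reduce the whole requirement to an inequality in $c$ alone. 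This is exactly the place where the hypotheses $\epsilon\in(0,1)$ and $c^2>2\ln(1.25/\delta)$ are used: $\epsilon<1$ forces $c>1$ so the polynomial prefactor $\frac{\sigma}{t\sqrt{2\pi}}$ and the $-\Delta f/2$ correction are dominated, and the constant $1.25$ is calibrated to make the leftover terms cancel. I expect this last step — verifying that $1.25$, rather than a smaller constant, is enough to swallow the lower-order contributions — to be the only delicate part; the dimensional reduction and the privacy-loss bookkeeping are routine.
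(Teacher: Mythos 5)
The paper does not actually prove this theorem itself --- it is quoted from Dwork and Roth and the proof is explicitly deferred to that reference --- and your sketch is an accurate reproduction of that classical argument: reduction to one dimension by rotation invariance, the explicit privacy-loss random variable $L(x)=(2x\,\Delta f+(\Delta f)^2)/(2\sigma^2)$, the standard lemma converting the tail bound $\Pr[|L|>\epsilon]\le\delta$ into $(\epsilon,\delta)$-DP, and the Gaussian tail estimate after substituting $\sigma=c\,\Delta f/\epsilon$. Your identification of where $\epsilon\in(0,1)$ and $c^2>2\ln(1.25/\delta)$ enter (dominating the polynomial prefactor and the $-\Delta f/2$ correction) matches the bookkeeping in the cited source, so your approach is essentially the same as the one the paper relies on.
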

For detailed proof, please refer to the reference \cite{10.1561/0400000042}.
\par
We assume that the  Gaussian mechanism is adopted in our work because it is convenient to split DP noises obeying the Gaussian distribution into multiple shares \cite{abadi2016deep}. 

\subsection{DP-FedAVG}
\label{Sec:Preliminaries.dpfed}
FedAVG is the most commonly used model average algorithm in FL, and thereby FedAVG is used for our study. Based on previous works \cite{mcmahan2017communication,wu2020value,geyer2017differentially,wei2020federated}, we present the client based DP-FedAVG here to ease our following discussion. 

Without loss of generality, we assume that there are  $K$ clients. The client $k$ owns a private dataset $\mathcal{D}_k$ with cardinality $d_k$.  These clients target to train a model with parameters represented by the vector $\mathbf{w} \in \mathbb{R}^h$.  
In FedAVG, clients need to exchange model parameters for multiple  rounds with the PS. Each round is also called a global iteration. At the beginning of global round $t$, each participating client receives the global parameters $\mathbf{w}_{t}$ from the PS to conduct a number of local iterations. Then, clients return their locally updated model parameters plus DP noises to the PS. By receiving the computation results from a certain number of clients, the PS aggregates received parameters and embarks a new round of global iteration. The detail  of the DP-FedAVG algorithm is presented in {Algorithm \ref{FedAVG}}. 

\begin{algorithm}
    \caption{DP-FedAVG Algorithm}
    \label{FedAVG}
    \LinesNumbered
    \textbf{PS executes:}
    \\
    Initialize $\mathbf{w_0}$;
    \\
    \For{each round $t = 1,2,...$ }
    {
    $m \leftarrow$max$(C \times K,1)$
    \\
    $\mathcal{M}_t \leftarrow $(Random set of $m$ clients)
    \\
    \For{each client $k \in \mathcal{M}_t$ in parallel }
    {
    $\widetilde{\mathbf{w}}_{t+1}^k \leftarrow $ ClientUpdate($k,\mathbf{w}_t$, $p_k$)
    \par
    }
    $\mathbf{w}_{t+1} \leftarrow \sum_{k\in \mathcal{M}_t} \widetilde{\mathbf{w}}_{t+1}^k$
    }
    \ 
    \\
    \textbf{ClientUpdate$(k,\mathbf{w}_t, p_k)$} 
    \\
    $\mathcal{B} \leftarrow $(split $\mathcal{D}_k$ into batches of size $B$)
    \\
    \For{each local round $i$ from $1$ to $E$}
    {
    \For{batch $b \in \mathcal{B}$}
    {
    $\mathbf{w} \leftarrow \mathbf{w}-\eta \nabla \ell(w ; b)$
    }
    }
    $\sigma_k^2 \leftarrow$ (Gaussian Mechanism)
    \\
    $ \mathbf{n}_{t+1}^k \leftarrow \mathcal{N}(0, \sigma_k^2\mathbb{I}_h)$ 
    \\
    return  $p_k\mathbf{w} + \mathbf{n}_{t+1}^k$
\end{algorithm}

In {Algorithm \ref{FedAVG}},  $C$ is the fraction of clients that participate each global iteration, $\mathbf{n}_{t+1}^k$ is the Gaussian noise and  $p_k$ is the aggregation weight of client $k$, $\mathcal{M}_t$ is the set of clients that participate in round $t$. Usually, $p_k = d_k/(\sum_{i \in \mathcal{M}_t} d_i)$. $\mathcal{B}$ is the set of local sample batches, $E$ is the number of local iterations to be conducted and $\eta$ is the learning rate.

Let $f_k(\mathcal{D}_k, \mathbf{w}, p_k)$ represent the function returning the locally updated parameters with input $\mathbf{w}$ and $p_k$. The sensitivity of $f_k$ is denoted by $\Delta f_k$. We assume that the privacy budget of client $k$ is represented by $\epsilon_k$ and $\delta_k$. 
\begin{corollary}
Algorithm \ref{FedAVG} satisfies  $(\epsilon, \delta)$-differentially private, if $\mathbf{n}_{t+1}^k$ is sampled from $\mathcal{N}(0, \sigma_k^2\mathbb{I}_h)$ where $\sigma_k \geq \frac{c \Delta f_k}{\epsilon_k}$, $c^2 > 2 ln(1.25/\delta_k)$ and $\mathbb{I}_h$ is the $h\times h$ identity matrix.
\end{corollary}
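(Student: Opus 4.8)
The plan is to deduce this corollary directly from the Gaussian Mechanism (Theorem~\ref{The:GaussM}) together with the standard post-processing and composition properties of differential privacy. First I would observe that, within a single invocation of \textbf{ClientUpdate}$(k,\mathbf{w}_t,p_k)$, the quantity returned to the PS is exactly $f_k(\mathcal{D}_k,\mathbf{w}_t,p_k)+\mathbf{n}_{t+1}^k$, i.e., the deterministic locally-updated-and-reweighted vector $p_k\mathbf{w}$ perturbed by an independent draw $\mathbf{n}_{t+1}^k\sim\mathcal{N}(0,\sigma_k^2\mathbb{I}_h)$. Since the hypotheses $\sigma_k\geq c\Delta f_k/\epsilon_k$ and $c^2>2\ln(1.25/\delta_k)$ are precisely the conditions required by Theorem~\ref{The:GaussM} applied to the map $f_k$ with $L_2$ sensitivity $\Delta f_k$, each such release is $(\epsilon_k,\delta_k)$-differentially private with respect to $\mathcal{D}_k$. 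The multivariate case reduces to the scalar statement of Theorem~\ref{The:GaussM} because the noise covariance is isotropic and $\Delta f_k$ controls the norm of the difference of the two output vectors on adjacent datasets.

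Next I would handle the aggregation on the PS. The update $\mathbf{w}_{t+1}=\sum_{k\in\mathcal{M}_t}\widetilde{\mathbf{w}}_{t+1}^k$ is a fixed function of the already-privatized messages $\widetilde{\mathbf{w}}_{t+1}^k$, so by the post-processing immunity of differential privacy the per-round view available to a (curious) PS stays $(\epsilon_k,\delta_k)$-DP for each client $k$; moreover, messages of distinct clients act on disjoint datasets $\mathcal{D}_k$, so they do not interfere with one another's guarantee.

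Finally I would account for the multiple global rounds. Because the input $\mathbf{w}_t$ fed to client $k$ at round $t$ is itself a function of earlier privatized releases, the correct tool is \emph{adaptive} sequential composition: over $T$ rounds the composite mechanism seen by the PS is $\big(\sum_{t}\epsilon_k,\ \sum_{t}\delta_k\big)$-DP by basic composition, or $(\epsilon,\delta)$ with a tighter $\epsilon=O\!\big(\sqrt{T\log(1/\delta')}\,\epsilon_k\big)$ by advanced composition; optionally, privacy amplification by the random subsampling of the $m$-client set $\mathcal{M}_t$ can be invoked to sharpen this, noting that a client not selected in round $t$ discloses nothing in that round. Collecting the per-round guarantees through composition yields the claimed $(\epsilon,\delta)$ statement.

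The main obstacle I anticipate is bookkeeping rather than any single inequality: stating precisely how the global $(\epsilon,\delta)$ relates to the per-round per-client $(\epsilon_k,\delta_k)$ once both the $T$-fold composition and the random client subsampling are taken into account, and ensuring the composition invoked is the adaptive variant since $\mathbf{w}_t$ depends on the history. If the intended reading of the corollary is merely the single-round, per-client guarantee, then it is essentially an immediate consequence of Theorem~\ref{The:GaussM} plus post-processing, and the composition step collapses.
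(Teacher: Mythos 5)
Your proposal is correct and, at its core, takes the same route as the paper: the paper's entire proof is the single sentence ``The proof is straightforward from Theorem~\ref{The:GaussM},'' i.e.\ exactly your first paragraph --- the per-client release $p_k\mathbf{w}+\mathbf{n}_{t+1}^k$ is the Gaussian mechanism applied to $f_k$ under the stated conditions on $\sigma_k$ and $c$. Where you go beyond the paper is in the second and third paragraphs: the paper never invokes post-processing for the PS aggregation, never addresses adaptive composition over the $T$ global rounds, and never specifies how the global $(\epsilon,\delta)$ in the corollary's statement relates to the per-client, per-round $(\epsilon_k,\delta_k)$. Your observation that the corollary, read literally as a statement about the whole of Algorithm~\ref{FedAVG}, requires sequential composition (and could be sharpened by subsampling amplification over $\mathcal{M}_t$) identifies a real looseness in the paper's claim rather than a flaw in your argument; the paper implicitly adopts the ``single-round, per-client'' reading under which, as you note, the composition step collapses and the corollary is immediate from Theorem~\ref{The:GaussM}. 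In short, your proof is a strictly more careful version of the paper's, and your anticipated ``bookkeeping obstacle'' is precisely the step the paper elides.
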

Here $h$ is the model dimension. The proof is straightforward from Theorem~\ref{The:GaussM}.

According to Algorithm \ref{FedAVG}, the disturbance of the DP noises on the aggregated parameters is 
\begin{equation}
\label{EQ:Aggre}
  \mathbf{w}_{t+1} \leftarrow \sum_{k\in \mathcal{M}_t} p_k \mathbf{w}_{t+1}^k + \sum_{k\in \mathcal{M}_t}  \mathbf{n}_{t+1}^k.
\end{equation}
From the right hand side of Eq.\eqref{EQ:Aggre}, we can see that the first term represents the aggregated parameters while the second term represents the disturbance of the DP noises. They are independently generated by all participating clients, and therefore the variance of $\sum_{k\in \mathcal{M}_t}  \mathbf{n}_{t+1}^k$ is $\sum_{k\in \mathcal{M}_t}  \sigma_k^2\mathbb{I}_h$. 
Apparently, if the privacy budget $\epsilon_k$ is smaller, $\sigma_k$ is higher and  the total variance on the server side is higher. Our approach is to make these noises negatively correlated so that the aggregated noise variance can be reduced.



\section{NISS Algorithm}
\label{Sec:NISS}
In this section, we introduce the NISS algorithm in details and leave the analysis of the reduced variance on the aggregation and the  security analysis of NISS in the next section.

\subsection{Illustrative Example}
\label{Sec:NISS.ill}
Before diving into the detailed NISS algorithm, we present a concrete example to illustrate how NISS works.
According to Algorithm \ref{FedAVG}, $\mathbf{n}_k$ is sampled from $\mathcal{N}(0, \sigma^2\mathbb{I}_h)$ since the dimension of $\mathbf{w}$ is $h$. It means that  noises of $h$ dimensions are generated independently and the noise offset is conducted for each dimension independently.  Thus, 
to simplify our discussion, we only need to consider the noise $n^k$ for a particular dimension of  client $k$, and  $n^k$ is sampled from $\mathcal{N}(0, \sigma^2)$. 

According to the property of the Gaussian distribution, $n^k$ can be split into $v$ shares and each share is sampled from $\mathcal{N}(0, \frac{\sigma^2}{v})$. The client can send out $v$ negated share to $v$ neighboring clients.  If all clients conduct the same operation, the client is expected to receive $v$ noise shares from other clients as well, which can be denoted as $r^k =(r_1^k, \dots, r_v^k)$.  To ease our understanding, the process is illustrated in Fig.\ref{Fig:shares}.

\begin{figure}[ht]

\centering  
\subfigure[A client sends out negated noise shares. ]{
\label{fig:clientshare.a}
\includegraphics[width=0.3\textwidth]{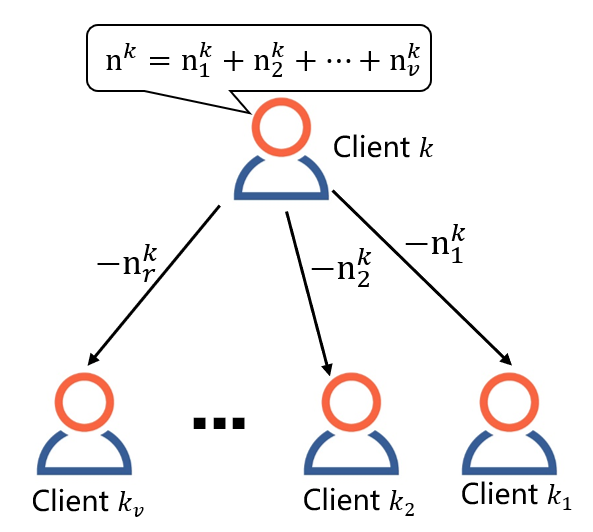}}
\subfigure[A client receives noise shares from other clients. ]{
\label{fig:clientshare.b}
\includegraphics[width=0.3\textwidth]{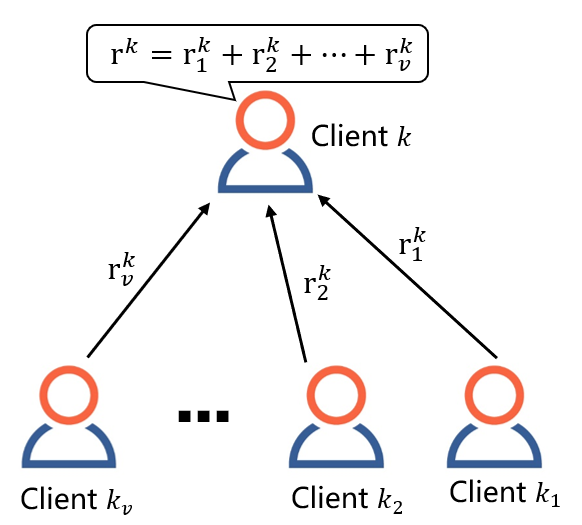}}
\caption{The workflow of NISS for a particular client. }
\label{Fig:shares}
\end{figure}

Then, the client adds both its own  noise $n^k$ and the sum of negated noise shares received from other clients to its parameter $w^k$ before  it submits $p_kw^k+ n^k+r^k$ to the PS. The $n^k+r^k$ can preserve the privacy of client $k$ while $r^k$ can be used to offset the noises generated by other clients by the PS. Since $v$ negated shares are generated randomly, no other client can exactly obtain the noise information of client $k$. In addition, the parameter information is only disclosed to the PS. As long as all other clients are trustworthy, these DP noises can be offset perfectly by negated noise shares.

\begin{algorithm}
    \caption{NISS Algorithm}
    \label{NISS}
    \LinesNumbered
    \textbf{PS executes:}
    \\
    Initialize $\mathbf{w_0},\sigma^2$;
    \\
    \For{each round $t = 1,2,...$ }
    {
    $m \leftarrow$max$(C \times K,1)$
    \\
    $\mathcal{M}_t \leftarrow $(Random set of $m$ clients)
    \\
    \For{each client $k \in \mathcal{M}_t$ in parallel }
    {
    $\widetilde{\mathbf{w}}_{t+1}^k \leftarrow $ClientUpdate($k,\mathbf{w}_t,p_k,\sigma^2$)
    \par
    }
    $\mathbf{w}_{t+1} \leftarrow \sum_{k\in \mathcal{M}_t} \widetilde{\mathbf{w}}_{t+1}^k$
    }
    \ 
    \\
    \ 
    \\
    \textbf{ClientUpdate$(k,\mathbf{w}_t,p_k,\sigma^2)$:} 
    \\
    $\mathcal{B} \leftarrow $(split $\mathcal{D}_k$ into batches of size $B$)
    \\
    \For{each local round $i$ from $1$ to $E$}
    {
    \For{batch $b \in \mathcal{B}$}
    {
    $\mathbf{w} \leftarrow \mathbf{w}-\eta \nabla \ell(w ; b)$
    }
    }
    $\widetilde{\mathbf{n}}_{t+1}^k \leftarrow$ ClientShare$(k,\sigma^2)$
    \\
    return $p_k\mathbf{w}+\widetilde{\mathbf{n}}_{t+1}^k$.
    \ 
    \\
    \ 
    \\
    \textbf{ClientShare$(k,\sigma^2)$}:
    \\
    $\sigma_k^2 \leftarrow $ (Gaussian mechanism)
    \\
    $\tau_k^2 \leftarrow$(Client $k$'s setting)
    \\
    $v \leftarrow \sigma_k^2/\sigma^2$
    \\
    $\mathcal{U} \leftarrow $(Random set of $v$ clients in this round)
    \\
    \For{$i = 1,2,....,v$}
    {
    $\mathbf{n}_i \leftarrow \mathcal{N}(0, \sigma_k^2\mathbb{I}_h)$
    \\
    $u_i \leftarrow$(Connect with $i$-th client in $\mathcal{U}$)
    \\
    Send $(-\mathbf{n}_i)$ to $u_i$
    \\
    Receive $\mathbf{r}_i$ from $u_i$
    \\
    $s_i \leftarrow \mathcal{N}(1,\tau_k^2)$
    }
    $\widetilde{\mathbf{n}} = \sum_{i=1}^v(\mathbf{n}_i+s_i\mathbf{r}_i)$
    \\
    return $\widetilde{\mathbf{n}}$
\end{algorithm}

\subsection{Algorithm Design}
We proceed to design the NISS algorithm based on the FedAVG algorithm introduced in the last section.

First of all, a tracker server is needed so that clients can  send and receive negated noise shares with each other. Each client needs to contact the tracker server to fetch a list of neighbor clients before it sends out negated noise shares. The tracker server is only responsible for recording live clients in the system and returning a random list of clients as neighbors for a particular client $k$. Obviously, the tracker server does not receive any noise information, and hence will not intrude user privacy. It can be implemented with light communication cost, similar to the deployment of the tracker server in peer-to-peer file sharing systems \cite{saroiu2001measurement}. 

In NISS, the operation of the PS is the same as that in FedAVG. The only difference lies in the operation of each client. Based on its own privacy budget and function sensitivity, the client $k$ needs to determine $\sigma_k$ so that $w^k+\mathcal{N}(0, \sigma_k^2)$ satisfies $(\epsilon_k, \delta_k) $-differentially privacy. Then, the client can determine the number of noise shares according to $v = \frac{\sigma_k^2}{\sigma^2}$ so that the client can generate $v$ noise shares and negated noise shares. Here $\sigma$ is a number much smaller than $\sigma_k$ and $\sigma$ can be a common value used by all clients. $\mathcal{N}(0, \sigma^2)$ is also called a unit noise. 

Because clients disclose their noise information with other clients, the gradient information can be  cracked to certain extent if some clients are not trustworthy and collude with the PS to intrude the privacy of a particular client. To prevent the leakage of user privacy, we propose to multiply a noise component $s_i$ to the received negated noise share $r_i$. $s_i$ is also sampled from the  Gaussian distribution $\mathcal{N}(1, \tau_k^2)$. Due to the disturbance of $s_i$, no other client and the PS can exactly crack the gradient information of the client. $\tau_k^2$ can be set according to the probability that other clients will collude with the PS. How to exactly set $\tau_k^2$ and the role of $s_i$ will be further analyzed in the next section.

By wrapping up, the details of the NISS algorithm is presented in {Algorithm \ref{NISS}}.

\section{Theoretic Analysis}
\label{Sec:Analysis}

In this section, we conduct analysis to show how much noise variance can be reduced by NISS on the PS side and  how the NISS algorithm defends against attacks. Based on our analysis, we also discuss the application of NISS in real FL systems.


\subsection{Analysis of Noise Offsetting}

Similar to Sec.\ref{Sec:NISS.ill}, to simplify our discussion, we only consider the noise offsetting for a particular dimension.  Let $n^k$ and $r^k$ denote noise shares and negated noise shares received from other clients for client $k$ respectively. 
Let $l_{k,i}$ denote the client that receive the $i$-th negated noise share from client $k$.

Based on {Algorithm \ref{NISS}}, the client $k$ uploads $p_kw^k+\sum_{i=1}^v (n^k_i+ r^k_i)$.  The aggregation conducted on the PS becomes
\begin{eqnarray}
 & &\sum_{k \in \mathcal{M}} p_kw^k+\sum_{k \in \mathcal{M}}\sum_{i=1}^v  (n^k_i+ s^k_i r^k_i), \nonumber\\
 & = &\sum_{k \in \mathcal{M}} p_kw^k+\sum_{k \in \mathcal{M}}\sum_{i=1}^v  (n^k_i- s^{l_{k,i}} n^k_i).\nonumber
\end{eqnarray}
Here $n_i^k$ is sampled from $\mathcal{N}(0, \sigma^2)$ and $s^{l_{k,i}}$ is sampled from $\mathcal{N}(1, \tau_l^2)$. $l$ is the abbreviation of $l_{k,i}$ if its meaning is clear from the context. Let $\mathbb{V}= \sum_{k \in \mathcal{M}}\sum_{i=1}^v  (n^k_i- s^{l_{k,i}} n^k_i)$ denote the aggregated DP noises and our study focuses on the minimization of $\mathbb{V}$.

Let us first analyze the variance of a particular noise share after offsetting. 
\begin{lemma}
\label{lemma:nsnr}
The variance of a noise share plus its negated share is:
\begin{equation}
     \mathbf{Var}[n_i^k-s^{l_{k,i}}n_i^k]  = \tau_l^2 \sigma^2.
\end{equation}
Here client $l$ receives the negated share of $n_i^k$ and $s^{l_{k,i}}$ is the noise imposed by client $l$.
\end{lemma}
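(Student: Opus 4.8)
The plan is to compute the variance of the random variable $n_i^k - s^{l_{k,i}} n_i^k = (1 - s^{l_{k,i}}) n_i^k$ directly, exploiting the independence of the two factors. First I would note that $n_i^k \sim \mathcal{N}(0,\sigma^2)$ and $s^{l_{k,i}} \sim \mathcal{N}(1,\tau_l^2)$ are generated independently: $n_i^k$ is the noise share drawn by client $k$, while $s^{l_{k,i}}$ is the distortion multiplier drawn by the receiving client $l$, so they are statistically independent. Hence $1 - s^{l_{k,i}} \sim \mathcal{N}(0,\tau_l^2)$, a zero-mean Gaussian independent of $n_i^k$.

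The key step is then to evaluate $\mathbf{Var}[(1-s^{l_{k,i}}) n_i^k]$ for a product of two independent zero-mean random variables $X = 1 - s^{l_{k,i}}$ and $Y = n_i^k$. Since $\mathbb{E}[XY] = \mathbb{E}[X]\mathbb{E}[Y] = 0$, we have $\mathbf{Var}[XY] = \mathbb{E}[X^2 Y^2] = \mathbb{E}[X^2]\,\mathbb{E}[Y^2] = \mathbf{Var}[X]\,\mathbf{Var}[Y] = \tau_l^2 \sigma^2$, which is exactly the claimed identity. I would write this out in one short display, being careful to invoke independence at the factorization step.

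There is essentially no hard part here — the only thing to be careful about is making the independence assumption explicit (that $s$ is drawn by the receiving client and therefore independent of the share $n_i^k$) and handling the mean-zero cancellation correctly so that no cross terms survive. If one wanted to be thorough one could also remark that the product $XY$ is not itself Gaussian, but that is irrelevant since only its variance is needed. I would keep the proof to three or four lines.
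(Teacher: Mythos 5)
Your proposal is correct and follows essentially the same route as the paper: both write the quantity as $(1-s^{l_{k,i}})n_i^k$, use independence of the share and the multiplier to factor $\mathbf{E}[(1-s^{l_{k,i}})^2(n_i^k)^2]$, and use the zero means to kill the cross term, yielding $\tau_l^2\sigma^2$. The only difference is cosmetic — you package the computation as the standard variance-of-a-product-of-independent-zero-mean-variables identity, while the paper expands $\mathbf{E}[(1-s^{l_{k,i}})^2]$ term by term — and your explicit statement of independence is in fact cleaner than the paper's closing remark, which (apparently by typo) calls the two variables ``dependent.''
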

\begin{proof}
According to the definition of the variance, we can obtain:
\begin{equation}
\begin{split}
     \mathbf{Var}[n_i^k-s^{l_{k,i}}n_i^k] & = \mathbf{Var}[(1-s^{l_{k,i}})n_i^k] 
    \\ 
    & = \mathbf{E}[(1-s^{l_{k,i}})^2(n_i^k)^2] -  (\mathbf{E}[(1-s^{l_{k,i}})(n_i^k)])^2
    \\
    & = \mathbf{E}[(1-s^{l_{k,i}})^2]\mathbf{E}[(n_i^k)^2] 
    \\
    & =  \mathbf{E}[(1-s^{l_{k,i}})^2]\sigma^2
    \\
    & = \mathbf{E}[1-2s^{l_{k,i}}+(s^{l_{k,i}})^2]\sigma^2
    \\
    & = (1-2\mathbf{E}[s^{l_{k,i}}]+\mathbf{E}[(s^{l_{k,i}})^2])\sigma^2
    \\
    & = (1-2+(1+\tau_l^2))\sigma^2
    \\
    & = \tau_l^2\sigma^2.
\end{split}
\end{equation}
The above formula holds because $n_i^k$ is sampled from $\mathcal{N}(0,\sigma^2)$ and $s^{l_{k,i}}$ is sampled from $\mathcal{N}(1,\tau_l^2)$. So we can obtain $\mathbf{E}[(n_i^k)^2] = \mathbf{Var}[n_i^k]+\mathbf{E}[n_i^k] = \sigma^2$ and $\mathbf{E}[(s^{l_{k,i}})^2] = 1 + \tau_l^2$ similarly.
Apparently, $n_i^k$ and $s^{l_{k,i}}$ are dependent according to our algorithm. 
\end{proof}

\begin{theorem}
\label{Th:PSnoise}
After noise offsetting, the variance of the aggregated noise  on the PS side is:
\begin{equation}
    \mathbb{V} = \sum_{k \in \mathcal{M}}\sigma_k^2 \tau_k^2.
\end{equation}
\begin{proof}
According to {Lemma \ref{lemma:nsnr}}, because each $n_i$ and $s_i$ are dependent, we can obtain:
\begin{equation}
    \begin{split}
     \mathbf{Var}[\sum_{k \in \mathcal{M}} \widetilde{n}^k] = \mathbf{Var}[\sum_{k \in \mathcal{M}} \sum_{l \in \mathcal{U}^k} (n_i^k-s^{l_{k,i}}n_i^k)].
    \end{split}
\end{equation}
Since client $k$ will send and receive $v_k$ noise shares and negated noise shares, thus each $\tau_l^2$ will be added $v_l$ times according to {Algorithm \ref{NISS}}. By substituting $l$ by $k$, we can obtain:
\begin{equation}
\begin{split}
    \mathbf{Var}[\sum_{k \in \mathcal{M}} \sum_{l \in \mathcal{U}^k} (n_i^k-s^{l_{k,i}}n_i^k)] = 
    \sum_{k \in \mathcal{M}} \sum_{l \in \mathcal{U}^k} \tau_l^2\sigma^2
    \\
    = \sigma^2 \sum_{k \in \mathcal{M}} v_k \tau_k^2
     =  \sum_{k \in \mathcal{M}} \sigma_k^2 \tau_k^2
\end{split}
\end{equation}
The third equality holds because $v_k = \sigma_k^2 / \sigma^2$.
\end{proof}
\end{theorem}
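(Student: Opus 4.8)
The plan is to compute $\mathbb{V} = \mathbf{Var}\!\left[\sum_{k\in\mathcal{M}}\sum_{i=1}^{v_k}(n_i^k - s^{l_{k,i}}n_i^k)\right]$ by reducing it to a sum of the per-share variances already obtained in Lemma~\ref{lemma:nsnr}. First I would argue that the summands, indexed by the pairs $(k,i)$, are mutually independent: the term associated with $(k,i)$ is a function only of the noise share $n_i^k$ and of the distortion factor $s^{l_{k,i}}$ that the receiving client $l_{k,i}$ attaches to that particular incoming share; since all shares $n_i^k$ are drawn independently across $(k,i)$, and each client draws a fresh $s$ sampled from $\mathcal{N}(1,\tau^2)$ for every distinct incoming share, no two summands share a common source of randomness. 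Within a single summand $n_i^k$ and $s^{l_{k,i}}$ are of course dependent — that is precisely why Lemma~\ref{lemma:nsnr} had to be established by direct computation — but that dependence is confined to one term. Consequently $\mathbf{Var}$ distributes over the outer double sum.

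Next I would invoke Lemma~\ref{lemma:nsnr} to replace each term's variance by $\tau_{l_{k,i}}^2\sigma^2$, giving $\mathbb{V} = \sigma^2\sum_{k\in\mathcal{M}}\sum_{i=1}^{v_k}\tau_{l_{k,i}}^2$. The remaining step is a reindexing argument: rather than summing over the $(k,i)$ ``send'' events, I re-sum over ``receive'' events. Every negated share sent by some client is received by exactly one client, and by construction of \textbf{ClientShare} in Algorithm~\ref{NISS} each client $l$ receives $v_l$ such shares and multiplies each by its own distortion factor of variance $\tau_l^2$. Hence the multiset $\{\tau_{l_{k,i}}^2 : k\in\mathcal{M},\, 1\le i\le v_k\}$ contains $\tau_l^2$ with multiplicity $v_l$ for every $l\in\mathcal{M}$, so $\sum_{k}\sum_{i=1}^{v_k}\tau_{l_{k,i}}^2 = \sum_{l\in\mathcal{M}} v_l\tau_l^2$. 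Substituting $v_l = \sigma_l^2/\sigma^2$ cancels the factor $\sigma^2$, and after renaming the summation index this yields $\mathbb{V} = \sum_{k\in\mathcal{M}}\sigma_k^2\tau_k^2$.

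I expect the main obstacle to be making the reindexing rigorous — pinning down the precise sender/receiver pairing implied by the algorithm and verifying that each client receives exactly $v_l$ shares, so that each $\tau_l^2$ is counted exactly $v_l$ times. A secondary point worth an explicit sentence is the independence claim, since the terms are visibly correlated within a pair; the cleanest way to dispatch it is to observe that distinct summands are measurable with respect to disjoint subsets of the primitive independent variables $\{n_i^k\}$ and $\{s^l_j\}$, so all pairwise covariances vanish and $\mathbf{Var}$ of the sum equals the sum of the $\mathbf{Var}$'s, after which Lemma~\ref{lemma:nsnr} and the counting argument finish the proof.
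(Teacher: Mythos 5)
Your proposal is correct and follows essentially the same route as the paper's own proof: distribute the variance over the per-share summands, apply Lemma~\ref{lemma:nsnr} to each, and reindex the sum from send events to receive events so that each $\tau_l^2$ is counted $v_l$ times before substituting $v_l=\sigma_l^2/\sigma^2$. Your version is in fact somewhat more careful than the paper's, since you explicitly justify why the cross-covariances vanish (distinct summands depend on disjoint primitive random variables) and you flag the sender/receiver pairing assumption that the paper leaves implicit.
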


{\bf Remark:} From Theorem~\ref{Th:PSnoise}, we can observe that $\mathbb{V}= 0$ if $\tau = 0$ implying that DP noises are perfectly offset on the PS side. However, if $\tau = 1$, the value of $\mathbb{V}$  is the same as that without any noise offsetting. The value of $\tau$ depends on the trustworthy between clients. We will further discuss how to set $\tau$ after the security analysis in the next subsection.  

\subsection{ Security Analysis}

We conduct the security analysis through analyzing the privacy preservation for a particular client. We suppose that the target of a particular client is to satisfy the $(\epsilon_k, \delta_k)$-differentially private.

It is easy to understand that the NISS algorithm satisfies the $(\epsilon_k, \delta_k)$-differentially private by setting $\tau_k=0$ or $s^l=1$, if the PS and clients do not collude. What client $k$ submits to the PS is $w^k+\sum_{i=1}^v(n^k_i+r^k_i)$. The noise $\sum_{i=1}^v(n^k_i+r^k_i)$ is also a Gaussian random variable with variance $2\sigma_k^2$, and hence the NISS algorithm on client $k$ satisfies $(\epsilon_k, \delta_k)$-differentially private. Meanwhile, no other client can crack the parameter information since the parameter information is only disclosed to the PS.

However, it is not guaranteed that the PS never colludes with clients. To conduct more general analysis, we assume that there is $\rho\in [0,1]$ fraction of other clients will collude with the PS. The problem is how to set $\tau_k$ so that the NISS algorithm on client $k$ can still satisfies 
$(\epsilon_k, \delta_k)$-differentially private.

Let $\mathcal{U}^k$ represent the set of clients that client $k$ will contact. There is no prior knowledge about which client will collude with the PS. The tracker server randomly select clients for $\mathcal{U}^k$. It implies that $\rho$ fraction of $\mathcal{U}^k$ will disclose the noise share information with the PS. We use $\mathcal{U}_{\rho}^k$ to denote the clients who collude with the PS and $\mathcal{U}_{1-\rho}^k$ to denote the clients who do not collude. Apparently, the size of $\mathcal{U}_{\rho}^k$ and $\mathcal{U}_{1-\rho}^k$ are $\rho v$ and $(1-\rho)v$. Thus, the effective noise uploaded by client $k$ becomes $\sum_{i \in \mathcal{U}_{1-\rho}^k }(n^k_i+s^{l_{k,i}} r^k_i) + \sum_{i \in \mathcal{U}_{\rho}^k} s^{l_{k,i}} r^k_i$. To ensure that  $(\epsilon_k, \delta_k)$-differentially private can be satisfied, it requires $\mathbf{Var}[\sum_{i \in \mathcal{U}_{1-\rho}^k }(n^k_i+s^{l_{k,i}} r^k_i) + \sum_{i \in \mathcal{U}_{\rho}^k} s^{l_{k,i}} r^k_i]\geq \sigma_k^2$. It turns out that
\begin{theorem}
\label{THE:tau}
If $n^k$ sampled from $\mathcal{N}(0, \sigma_k^2)$ can make $w^k+n^k$ satisfy $(\epsilon_k, \delta_k)$-differentially private, the NISS algorithm satisfies $(\epsilon_k, \delta_k)$-differentially private as long as $\tau_k^2\geq \max\{2\rho-1, 0\}$. Here $\rho\in[0,1)$ represents the percentage of other clients that collude with the PS. 
\end{theorem}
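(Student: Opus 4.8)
The plan is to reduce, exactly as in Section~\ref{Sec:NISS.ill} and the preceding analysis, to a single coordinate, so that client $k$'s report to the PS is $p_k w^k + \sum_{i=1}^{v}\bigl(n_i^k + s_i^k r_i^k\bigr)$, with $n_i^k\sim\mathcal{N}(0,\sigma^2)$, $s_i^k\sim\mathcal{N}(1,\tau_k^2)$, $r_i^k$ the negated share received from the $i$-th partner $l_{k,i}\in\mathcal{U}^k$, and $v=\sigma_k^2/\sigma^2$. The first step is to pin down precisely what the adversarial coalition (the PS together with the $\rho v$ partners in $\mathcal{U}_\rho^k$) can strip away from this expression. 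A colluding partner $l_{k,i}$ knows the share it received, namely $-n_i^k$, and the share it generated and returned, namely $r_i^k$; it does \emph{not} know $s_i^k$, which client $k$ draws privately. Hence for $i\in\mathcal{U}_\rho^k$ the coalition can cancel $n_i^k$ together with the conditional mean $r_i^k$ of $s_i^k r_i^k$, leaving the residual $(s_i^k-1)r_i^k$, whereas for $i\in\mathcal{U}_{1-\rho}^k$ nothing is learned and the entire summand $n_i^k+s_i^k r_i^k$ survives. The effective noise shielding $w^k$ is therefore
\[
Z \;=\; \sum_{i\in\mathcal{U}_{1-\rho}^k}\bigl(n_i^k+s_i^k r_i^k\bigr)\;+\;\sum_{i\in\mathcal{U}_{\rho}^k}(s_i^k-1)\,r_i^k .
\]

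Next I would compute $\mathbf{Var}[Z]$ using the same second-moment facts already established in the proof of Lemma~\ref{lemma:nsnr}: $\mathbf{E}[(n_i^k)^2]=\sigma^2$, $\mathbf{E}[(s_i^k)^2]=1+\tau_k^2$ (so $\mathbf{E}[(s_i^k-1)^2]=\tau_k^2$), together with the mutual independence of distinct shares and the independence of $n_i^k$ from the pair $(s_i^k,r_i^k)$ (since $n_i^k$ is generated locally by $k$ while $r_i^k$ is received from another client and $s_i^k$ is a fresh draw). A term with an honest partner contributes $\sigma^2+(1+\tau_k^2)\sigma^2=(2+\tau_k^2)\sigma^2$, and a term with a colluding partner contributes $\mathbf{E}[(s_i^k-1)^2]\,\mathbf{E}[(r_i^k)^2]=\tau_k^2\sigma^2$. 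Summing over the $(1-\rho)v$ honest and $\rho v$ colluding indices and using $v\sigma^2=\sigma_k^2$ gives
\[
\mathbf{Var}[Z]\;=\;\bigl[(1-\rho)(2+\tau_k^2)+\rho\,\tau_k^2\bigr]\,v\sigma^2\;=\;(2-2\rho+\tau_k^2)\,\sigma_k^2 .
\]
Requiring this residual noise to be at least as large as the standalone Gaussian noise that already certifies $(\epsilon_k,\delta_k)$-differential privacy, i.e. $\mathbf{Var}[Z]\ge\sigma_k^2$, is equivalent to $2-2\rho+\tau_k^2\ge 1$, hence to $\tau_k^2\ge 2\rho-1$; combined with the trivial constraint $\tau_k^2\ge 0$ this is exactly $\tau_k^2\ge\max\{2\rho-1,0\}$. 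The two boundary cases serve as sanity checks: $\rho=0$ recovers the $\tau_k=0$ observation of the Remark, while $\rho\to 1$ forces $\tau_k^2\ge 1$, i.e. no offsetting benefit remains.

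The last step is to turn the variance bound into the DP conclusion via the Gaussian mechanism (Theorem~\ref{The:GaussM}) and the monotonicity of the $(\epsilon,\delta)$ guarantee in the scale of the injected noise: the coalition's view of client $k$ equals $p_k w^k$ plus independent noise of variance at least $\sigma_k^2\ge(c\,\Delta f_k/\epsilon_k)^2$, so the release is $(\epsilon_k,\delta_k)$-differentially private. I expect the main obstacle to be precisely this implication, because $Z$ is not itself Gaussian — the colluding terms $(s_i^k-1)r_i^k$ are products of Gaussians. A fully rigorous argument cannot quote Theorem~\ref{The:GaussM} verbatim; it must instead either (i) write $Z$ as a Gaussian of variance $\sigma_k^2$ plus an independent extra term and invoke a "adding independent noise cannot weaken privacy'' lemma, or (ii) condition on the partners' disclosed values so that, conditionally, each surviving summand is Gaussian and the $(\epsilon_k,\delta_k)$ bound holds for every such conditioning. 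Making one of these routes precise, and checking that the tracker server's random choice of $\mathcal{U}_\rho^k$ gives the coalition no adaptive advantage, is the delicate part; the variance bookkeeping itself is a direct reprise of Lemma~\ref{lemma:nsnr}.
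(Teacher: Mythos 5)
Your proof follows essentially the same route as the paper's Appendix A: the same split of $\mathcal{U}^k$ into honest and colluding partners, the same per-share variance bookkeeping yielding $\bigl(2-2\rho+\tau_k^2\bigr)\sigma_k^2$, and the same threshold $\tau_k^2\geq\max\{2\rho-1,0\}$. Your closing observation that the residual noise $Z$ is not Gaussian, so that Theorem~\ref{The:GaussM} cannot be invoked verbatim and one must either add-then-ignore an independent Gaussian component or condition on the coalition's view, is a point the paper's own proof glosses over, and either of your proposed fixes would tighten the argument.
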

The detailed proof is presented in Appendix A.

{\bf Remark:} It is worth to mention the special case with $\rho=1$. According to Theorem~\ref{THE:tau}, $\tau=1$ and $s^{l_{k,i}}$ is sampled from $\mathcal{N}(1,1)$ if $\rho =1$. In this case, $\mathbf{Var}[\sum_{i \in \mathcal{U}_{1-\rho}^k }(n^k_i+s^{l_{k,i}} r^k_i) + \sum_{i \in \mathcal{U}_{\rho}^k} s^{l_{k,i}} r^k_i]=\mathbf{E} [(\sum_{i=1}^v r^k_i)^2]$. According to the central limit theorem,  as long as $v\gg 1$, we have $\mathbf{E} [(\sum_{i=1}^v r^k_i)^2] \approx \sigma_k^2$. Thus, if $\rho=1$ and $\tau =1$, it implies that $\mathbf{Var}[\sum_{i \in \mathcal{U}_{1-\rho}^k }(n^k_i+s^{l_{k,i}} r^k_i) + \sum_{i \in \mathcal{U}_{\rho}^k} s^{l_{k,i}} r^k_i]=\sigma_k^2$ and $\mathbb{V}= \sum_k \sigma_k^2$. The variance of the aggregated noise on the PS is the same as that without any offsetting operation. In this extreme case, there exists a trade-off between model accuracy and privacy protection. One cannot improve the model accuracy without compromising privacy protection.

\subsection{Application of NISS in Practice}

As we have discussed in the last section, $\rho$ is a vital parameter. Our analysis uses $\rho$ to cover all cases with different fractions of malicious clients colluding with the PS.   If $\rho$ is close to $1$, it will significantly impair the performance of the NISS algorithm. 
In practice, $\rho$ can be set as a small value, which can be illustrated from two perspective. 

Firstly,  most FL systems are of a large-scale with tens of thousands of clients. If there are more normal clients, the fraction of malicious clients that will collude with the PS will be a smaller value. Secondly, our analysis is based on the assumption that $\mathcal{U}^k$ is randomly selected by the tracker server. In fact, clients can play coalitional  game with other clients they trust. For instance, the IoT devices of the same system can trust each other substantially. They can share negated noise information with each other by setting a small $\tau_k^2$  since the probability that neighboring clients collude with the PS is very low. From this example, we can also conclude that the NISS algorithm is particularly applicable for FL across multiple IoT systems. IoT devices in the sample system can form a coalition so that the variance of the aggregated noise is minimized. Besides,  devices within the same system can be connected with high speed networks so that the communication overhead to transmit noise shares is insignificant.

\section{Experiment}

\begin{figure*}[htbp]
\includegraphics[width=\textwidth]{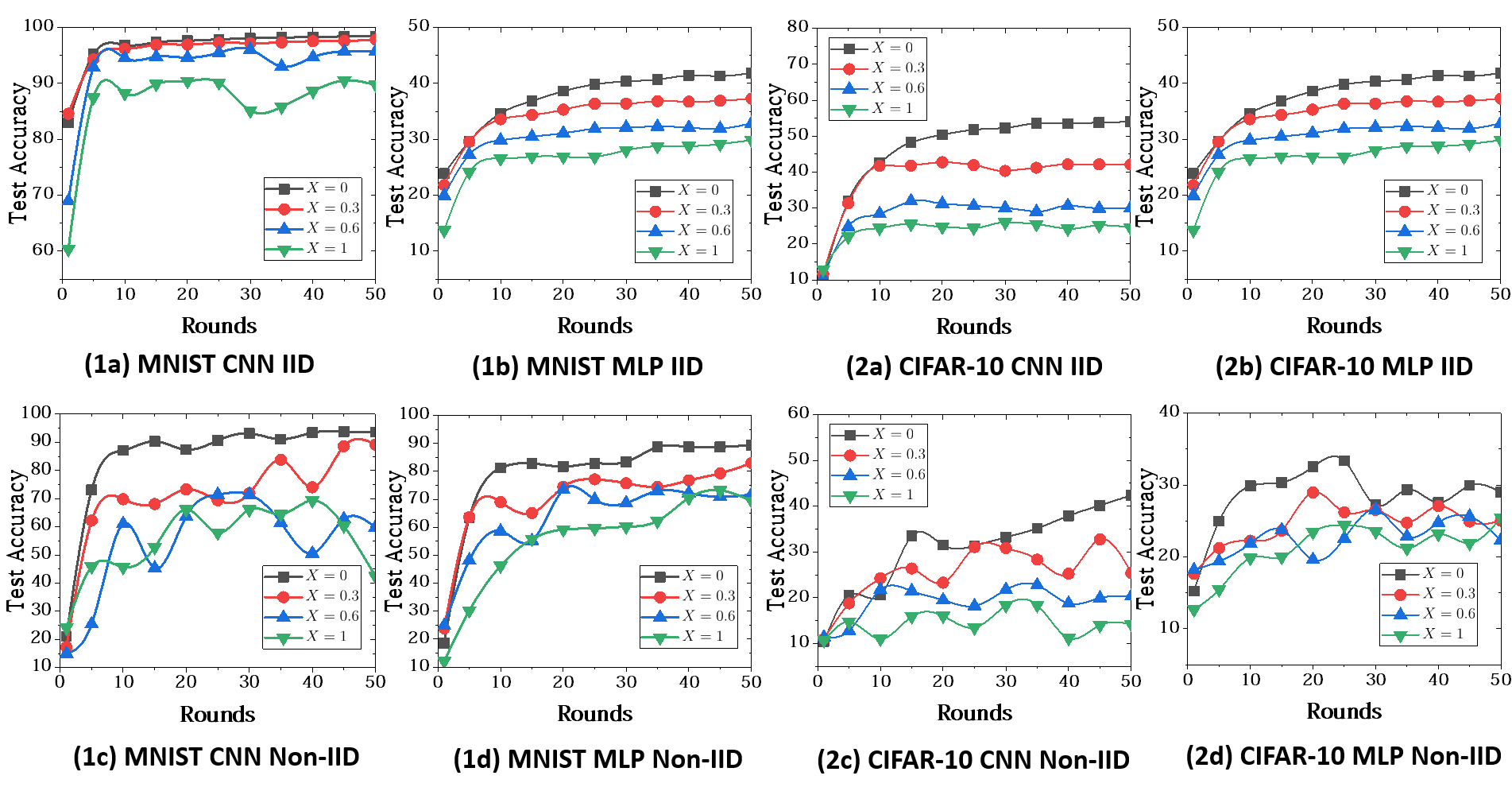}
\caption{$X = \tau_k^2$. If $X=0$,   FedAVG. If $X=1$,  DP-FedAVG. For the rest, NISS with $X =0.3$ and $X = 0.6$.}
\label{experiments}
\end{figure*}

In this section, we conduct experiments with MNIST and CIFAR-10 to evaluate the performance of NISS. 

\label{Sec:Experiment}
\subsection{Experimental Setup}

\subsubsection{Simulation Settings}
Based on \cite{abadi2016deep}, we use the Gaussian mechanism to add noises to local model parameters. We use the same experimental settings as in \cite{mcmahan2017communication}, \cite{wei2020federated} and \cite{abadi2016deep}. The FL settings of our experiment are as follows: 
\begin{itemize}
    \item Number of users: $K = 100$
    \item User fraction rate: $C = 0.3$
    \item Local minibatch size: $B =10$
    \item Learning rate: $\eta = 0.01$
    \item Number of local round: $E = 5$
    \item Unit noise variance: $\sigma^2 = 0.01$
    \item DP parameters: $\epsilon = 10, \delta = 0.0001$
\end{itemize}
\par
In addition, to achieve DP-FedAVG, we use the norm clipping technique with a clipping threshold $\zeta$ to restrict the range of client's gradients. If a client’s some gradient exceeds $\zeta$, it will be clipped to $\zeta$. The details of its mechanism can be found in \cite{wei2020federated}. For our experiments, we set $\zeta = 3$.  

\par
We use Pytorch \cite{paszke2017automatic} as our experimental environment. The experiments run on the computer with a processor with six 2.6GHz Intel i7 cores. The computer is equipped with 16GB RAM and a GPU of AMD Radeon Pro 5300M.

\subsubsection{Training Models and Datasets}
To make our experiment more comprehensive, we set up three different scenarios.
Firstly, we use public dataset MNIST and CIFAR-10 as our experimental data set. The MNIST dataset of handwritten digits contains 60,000 $28\times28$ grayscale images of the 10 digits with 50,000 training images and 10,000 test images. The CIFAR-10 dataset also consists of 60,000 $32 \times 32$ colour images in 10 classes, with 6,000 images per class. There are 50,000 training images and 10,000 test images.
Secondly, We use different neural network structures which are similar to those in \cite{mcmahan2017communication} \cite{abadi2016deep} and \cite{li2020privacy2} . i) A Convolutional Neural Network (CNN) with two $5 \times 5$ convolution layers, a fully connected layer with 512 units
and ReLU activation, and a final softmax output layer. ii) A Multilayer Perceptron (MLP) with 2-hidden layers with 200 units each using ReLU activations.
Thirdly, we split the dataset in IID and non-IID settings. For the IID setting, the data is shuffled and partitioned into 100 users each receiving the same size of examples. For non-IID setting, the data is sorted by labels and divided into different partitions. Then we distribute them to each client so that each client will receive a non-IID dataset.


\subsubsection{Metrics and Baselines}

We use the model accuracy on the test dataset to evaluate the accuracy performance of the NISS algorithm. Meanwhile, we implement FedAVG and DP-FedAVG algorithms as baselines in our experiments. 

In addition,  we also use the method in \cite{zhu2019deep} to detect the effect of NISS on privacy protection. we can evaluate leak-defence of a model average algorithm by determining whether the effective information can be recovered from one picture of CIFAR-100 or not. Similar to \cite{zhu2019deep}, we adopt the  DLG loss as the metrics. The method uses randomly initialized weights and uses L-BFGS \cite{liu1989limited} to match gradients from all trainable parameters. The DLG loss is the gradient match loss for L-BFGS. The lower the DLG loss is, the more information leaks, then the final recovered image will be clearer.

\subsection{Experiment results}

\subsubsection{Model Accuracy}

Fig.\ref{experiments} shows the results on the test accuracy of training models. Since we set up three different scenarios: different dataset, IID or non-IID and different neural network, we conducted eight sets of experiments. Here for feasibility and clarity, we uses $X$ to denote $\tau_k^2$ which is the variance of $s_i$. Then $X = 0$ means perfect offsetting by NISS which is equal to the effect of FedAVG. $X = 1$ means the variance of the aggregated noise on the PS side is $\sum_{k \in \mathbf{M}} \sigma_k^2$ which is the same as DP-FedAVG. Thus we use $X = 0$ and $X = 1$ to denote FedAVG and DP-FedAVG. From Fig.\ref{experiments}, we can see, by tuning $X$, the test accuracy of training model is increasing which means all clients are adding more noise and cause the variance of the aggregated noise on the PS side to increase. The higher $X$ is, the larger the variance of added noises is, and the more significant the accuracy
 deteriorates. This is consistent with our analysis. When the client data is IID, our NISS algorithm can increase the test accuracy by about $12 \sim 13 \%$ on MNIST and $15 \sim 25 \%$ on CIFAR-10 if all clients will not collude with the PS, namely, perfectly offsetting. This is because CIFAR-10 are all three-channel color picture, and the amount of noise has a higher impact on the accuracy. When the client data is non-IID, the test accuracy on MNIST increases by $30 \sim 40 \%$ and for CIFAR-10 the test accuracy is $10 \sim 20 \%$ higher. In addition, note that the test accuracy of CIFAR-10 is low because MLP model is too simple for training CIFAR-10 and non-IID data can cause it a low testing accuracy, this can be found in \cite{zhao2018federated}. Fig.\ref{experiments} also shows the trade-off between model accuracy and privacy protection. If we increase $X$, the accuracy will decrease and if we decrease $X$, the accuracy will increase. 
 \par
 In summary, when $X = 0$, since the noise can be offset perfectly, the model accuracy given by NISS is very close to that of FedAVG on the whole and better than that of the DP-FedAVG algorithm if all the clients will not collude with the PS. And even if some clients collude with the PS, by tuning $X$, each client can protect its privacy but the model accuaracy will decrease.

\begin{figure}[htbp]
\includegraphics[width=0.5\textwidth]{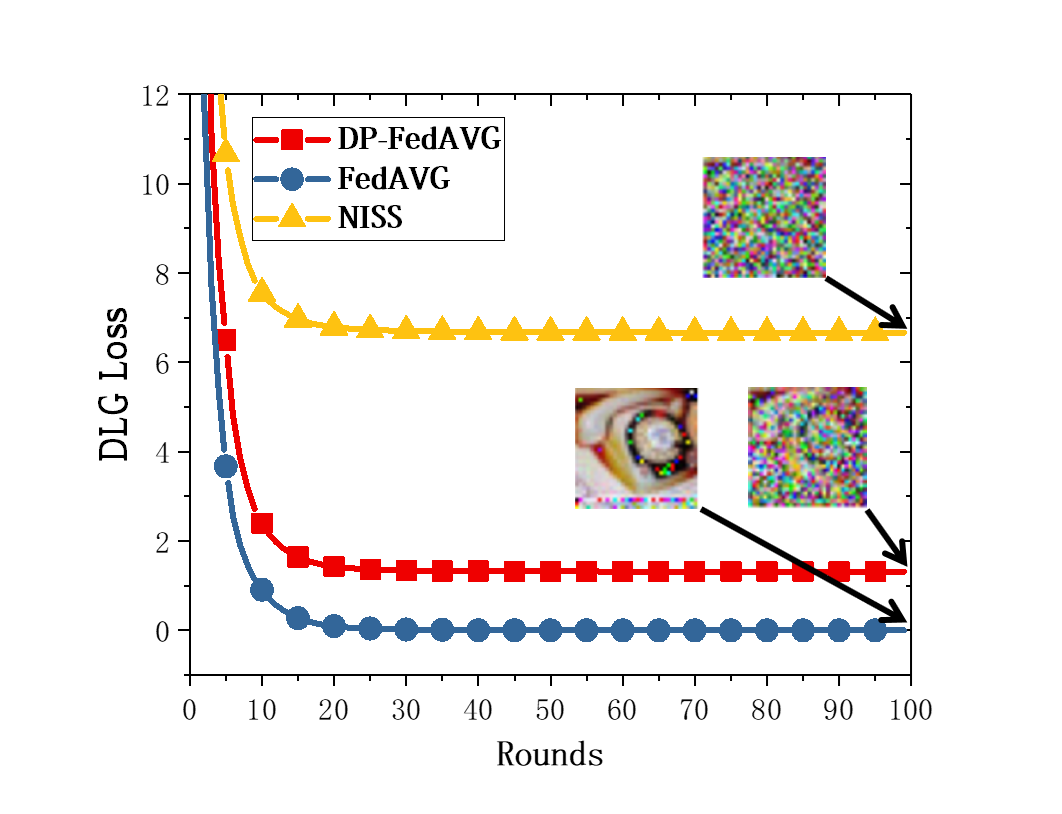}
\caption{DLG loss for FedAVG, DP-FedAVG and NISS. The above images show the effect of finally recovering image by DLG in \cite{zhu2019deep} and the original image is a "telephone".}
\label{experiments:dlg}
\end{figure}



\subsubsection{Privacy Protection}
In order to test the degree of privacy protection of the client for gradient information, we use method in \cite{zhu2019deep} to test the leak-defence of FedAVG, DP-FedAVG and NISS. We use the gradients from FedAVG, DP-FedAVG and NISS to run DLG. Fig.\ref{experiments:dlg} shows the results of DLG loss and the image it finally recovered. The lower the DLG loss is, the more information leaked, then the final recovered image will be clearer. From Fig.\ref{experiments:dlg}, we observe that our NISS algorithm almost does not leak any sensitive information, while DP-FedAVG may leak partial information about the privacy and FedAVG can not prevent the leakage of sensitive information totally.

\par
In summary, the above experiments demonstrate that our NISS algorithm can achieve extraodinary performance. When clients do not collude with the PS, our NISS can achieve the same accuracy as that of FedAVG which is better than DP-FedAVG and better privacy protection due to its large scale of noise for a single client. If some clients collude with the PS, the client can set its $\tau_k^2$ to protect its privacy. Our experiments also show the trade-off between model accuracy and privacy protection. If clients set a higher $\tau_k^2$, the accuracy will be lower and vice versa.

\section{Conclusion}
\label{Sec:Conclusion}
In this work, we propose a novel algorithm called NISS to offset the DP noises independently generated by clients in FL systems. NISS is a method for clients to generate negatively correlated noises.  Intuitively, each client splits its noise into multiple shares. Each share is negated and sent out to a neighboring client. Each client uploads its parameter plus its own noise and all negated noise shares received from other neighbors. A noise share of a particular client can be potentially offset by its negated value uploaded by another client.  We theoretically prove that the NISS algorithm can effectively reduce the variance of the aggregated noise on the PS so as to improve the model accuracy in FL. Experiments with MNIST and CIFAR-10 datasets are carried out to verify our analysis and demonstrate the extraordinary performance achieved by NISS.

\appendices
\section{Proof of theorem \ref{THE:tau}}
\begin{proof}
We will calculate $\mathbf{Var}[\sum_{i \in \mathcal{U}_{1-\rho}^k }(n^k_i+s^{l_{k,i}} r^k_i) + \sum_{i \in \mathcal{U}_{\rho}^k} s^{l_{k,i}} r^k_i]$ first. For $i \in \mathcal{U}_{1-\rho}^k$ and $i \in \mathcal{U}_{\rho}^k$, we will discuss separately. Firstly, for $i \in \mathcal{U}_{1-\rho}^k$, we have:
\begin{equation}
\begin{split}
     \mathbf{Var}[\sum_{i \in \mathcal{U}_{1-\rho}^k }(n^k_i+s^{l_{k,i}} r^k_i)] &  = 
      \sum_{i \in \mathcal{U}_{1-\rho}^k } \mathbf{Var}[(n^k_i+s^{l_{k,i}} r^k_i)] 
      \\ &
       = \sum_{i \in \mathcal{U}_{1-\rho}^k } \mathbf{Var}[(n^k_i]+ \mathbf{Var}[s^{l_{k,i}}r^k_i)] 
       \\ &
       = \sum_{i \in \mathcal{U}_{1-\rho}^k } \sigma^2 + \mathbf{E}[(s^{l_{k,i}}r^k_i))^2] 
       \\ &
       = \sum_{i \in \mathcal{U}_{1-\rho}^k } \sigma^2 + (\tau_k^2+1) \sigma^2
       \\ &
      = (1-\rho) (\tau_k^2+2) \sigma_k^2
\end{split}
\end{equation}
Secondly, for $i \in \mathcal{U}_{\rho}^k$, note that here $r^k_i$ is no longer a random variable and it is a certain number which we can approximate using central limit theorem, then we can calculate it as:
\begin{equation}
    \begin{split}
        \mathbf{Var}[\sum_{i \in \mathcal{U}_{\rho}^k} s^{l_{k,i}} r^k_i] & = 
        \sum_{i \in \mathcal{U}_{\rho}^k}\mathbf{Var} [ s^{l_{k,i}} r^k_i] 
        \\ &
         = \sum_{i \in \mathcal{U}_{\rho}^k} (r^k_i)^2 \mathbf{Var} [ s^{l_{k,i}}] 
         \\ &
         = \tau_k^2 \sum_{i \in \mathcal{U}_{\rho}^k}  (r^k_i)^2 
    \end{split}
\end{equation}
According to the central limit theorem, as long as $\rho v \gg 1$,  $\sum_{i \in \mathcal{U}_{\rho}^k} (r^k_i)^2 \approx \mathbf{E}[\sum_{i \in \mathcal{U}_{\rho}^k} (r^k_i)^2] = \rho v \sigma^2 = \rho \sigma_k^2 $. Thus, we have:
\begin{equation}
     \mathbf{Var}[\sum_{i \in \mathcal{U}_{\rho}^k} s^{l_{k,i}} r^k_i] =  \tau_k^2  \rho \sigma_k^2
\end{equation}
Then, we can obtain:
\begin{equation}
\begin{split}
     \mathbf{Var} & [\sum_{i \in \mathcal{U}_{1-\rho}^k }(n^k_i+s^{l_{k,i}} r^k_i) + \sum_{i \in \mathcal{U}_{\rho}^k} s^{l_{k,i}} r^k_i] 
       \\  & 
       =  \mathbf{Var}[\sum_{i \in \mathcal{U}_{1-\rho}^k }(n^k_i+s^{l_{k,i}} r^k_i)] +  \mathbf{Var}[\sum_{i \in \mathcal{U}_{\rho}^k} s^{l_{k,i}} r^k_i] 
       \\ & 
       = (1-\rho) (\tau_k^2+2) \sigma_k^2+  \tau_k^2  \rho \sigma_k^2
\end{split}
\end{equation}
To ensure that $(\epsilon_k,\delta_k)$-differentially private, it requires that:
\begin{equation}
     (1-\rho) (\tau_k^2+2) \sigma_k^2+  \tau_k^2  \rho \sigma_k^2 \geq \sigma_k^2
\end{equation}
Then we have:
\begin{equation}
    \tau_k^2 \geq 2 \rho -1
\end{equation}
Hence we can obtain $\tau_k^2 \geq \max \{ 2\rho-1,0 \}$.
\end{proof}


\ifCLASSOPTIONcaptionsoff
  \newpage
\fi




\bibliography{ref}

\begin{thebibliography}{10}

\bibitem{atzori2010internet}
Luigi Atzori, Antonio Iera, and Giacomo Morabito.
\newblock The internet of things: A survey.
\newblock {\em Computer networks}, 54(15):2787--2805, 2010.

\bibitem{gubbi2013internet}
Jayavardhana Gubbi, Rajkumar Buyya, Slaven Marusic, and Marimuthu Palaniswami.
\newblock Internet of things (iot): A vision, architectural elements, and
  future directions.
\newblock {\em Future generation computer systems}, 29(7):1645--1660, 2013.

\bibitem{zhang2019deep}
Chaoyun Zhang, Paul Patras, and Hamed Haddadi.
\newblock Deep learning in mobile and wireless networking: A survey.
\newblock {\em IEEE Communications Surveys \& Tutorials}, 21(3):2224--2287,
  2019.

\bibitem{li2018contract}
Jun Li, Shunfeng Chu, Feng Shu, Jun Wu, and Dushantha Nalin~K Jayakody.
\newblock Contract-based small-cell caching for data disseminations in
  ultra-dense cellular networks.
\newblock {\em IEEE Transactions on Mobile Computing}, 18(5):1042--1053, 2018.

\bibitem{kairouz2019advances}
Peter Kairouz, H~Brendan McMahan, Brendan Avent, Aur{\'e}lien Bellet, Mehdi
  Bennis, Arjun~Nitin Bhagoji, Keith Bonawitz, Zachary Charles, Graham Cormode,
  Rachel Cummings, et~al.
\newblock Advances and open problems in federated learning.
\newblock {\em arXiv preprint arXiv:1912.04977}, 2019.

\bibitem{mcmahan2017communication}
Brendan McMahan, Eider Moore, Daniel Ramage, Seth Hampson, and Blaise~Aguera
  y~Arcas.
\newblock Communication-efficient learning of deep networks from decentralized
  data.
\newblock In {\em Artificial Intelligence and Statistics}, pages 1273--1282.
  PMLR, 2017.

\bibitem{konevcny2016federated1}
Jakub Kone{\v{c}}n{\`y}, H~Brendan McMahan, Daniel Ramage, and Peter
  Richt{\'a}rik.
\newblock Federated optimization: Distributed machine learning for on-device
  intelligence.
\newblock {\em arXiv preprint arXiv:1610.02527}, 2016.

\bibitem{konevcny2016federated2}
Jakub Kone{\v{c}}n{\`y}, H~Brendan McMahan, Felix~X Yu, Peter Richt{\'a}rik,
  Ananda~Theertha Suresh, and Dave Bacon.
\newblock Federated learning: Strategies for improving communication
  efficiency.
\newblock {\em arXiv preprint arXiv:1610.05492}, 2016.

\bibitem{melis2019exploiting}
Luca Melis, Congzheng Song, Emiliano De~Cristofaro, and Vitaly Shmatikov.
\newblock Exploiting unintended feature leakage in collaborative learning.
\newblock In {\em 2019 IEEE Symposium on Security and Privacy (SP)}, pages
  691--706. IEEE, 2019.

\bibitem{hitaj2017deep}
Briland Hitaj, Giuseppe Ateniese, and Fernando Perez-Cruz.
\newblock Deep models under the gan: information leakage from collaborative
  deep learning.
\newblock In {\em Proceedings of the 2017 ACM SIGSAC Conference on Computer and
  Communications Security}, pages 603--618, 2017.

\bibitem{zhu2019deep}
Ligeng Zhu, Zhijian Liu, and Song Han.
\newblock Deep leakage from gradients.
\newblock In {\em Advances in Neural Information Processing Systems}, pages
  14774--14784, 2019.

\bibitem{zhao2020idlg}
Bo~Zhao, Konda~Reddy Mopuri, and Hakan Bilen.
\newblock idlg: Improved deep leakage from gradients.
\newblock {\em arXiv preprint arXiv:2001.02610}, 2020.

\bibitem{li2020federated}
Tian Li, Anit~Kumar Sahu, Ameet Talwalkar, and Virginia Smith.
\newblock Federated learning: Challenges, methods, and future directions.
\newblock {\em IEEE Signal Processing Magazine}, 37(3):50--60, 2020.

\bibitem{wei2020federated}
Kang Wei, Jun Li, Ming Ding, Chuan Ma, Howard~H Yang, Farhad Farokhi, Shi Jin,
  Tony~QS Quek, and H~Vincent Poor.
\newblock Federated learning with differential privacy: Algorithms and
  performance analysis.
\newblock {\em IEEE Transactions on Information Forensics and Security}, 2020.

\bibitem{10.1561/0400000042}
Cynthia Dwork and Aaron Roth.
\newblock The algorithmic foundations of differential privacy.
\newblock {\em Found. Trends Theor. Comput. Sci.}, 9(3–4):211–407, August
  2014.

\bibitem{abadi2016deep}
Martin Abadi, Andy Chu, Ian Goodfellow, H~Brendan McMahan, Ilya Mironov, Kunal
  Talwar, and Li~Zhang.
\newblock Deep learning with differential privacy.
\newblock In {\em Proceedings of the 2016 ACM SIGSAC Conference on Computer and
  Communications Security}, pages 308--318, 2016.

\bibitem{wu2020value}
Nan Wu, Farhad Farokhi, David Smith, and Mohamed~Ali Kaafar.
\newblock The value of collaboration in convex machine learning with
  differential privacy.
\newblock In {\em 2020 IEEE Symposium on Security and Privacy (SP)}, pages
  304--317. IEEE, 2020.

\bibitem{geyer2017differentially}
Robin~C Geyer, Tassilo Klein, and Moin Nabi.
\newblock Differentially private federated learning: A client level
  perspective.
\newblock {\em arXiv preprint arXiv:1712.07557}, 2017.

\bibitem{li2019convergence}
Xiang Li, Kaixuan Huang, Wenhao Yang, Shusen Wang, and Zhihua Zhang.
\newblock On the convergence of fedavg on non-iid data.
\newblock {\em arXiv preprint arXiv:1907.02189}, 2019.

\bibitem{yang2019federated}
Qiang Yang, Yang Liu, Tianjian Chen, and Yongxin Tong.
\newblock Federated machine learning: Concept and applications.
\newblock {\em ACM Transactions on Intelligent Systems and Technology (TIST)},
  10(2):1--19, 2019.

\bibitem{briggs2020review}
C~Briggs, Z~Fan, P~Andras, et~al.
\newblock A review of privacy-preserving federated learning for the
  internet-of-things.
\newblock 2020.

\bibitem{zhao2020privacy}
Yang Zhao, Jun Zhao, Linshan Jiang, Rui Tan, Dusit Niyato, Zengxiang Li,
  Lingjuan Lyu, and Yingbo Liu.
\newblock Privacy-preserving blockchain-based federated learning for iot
  devices.
\newblock {\em IEEE Internet of Things Journal}, 2020.

\bibitem{zhao2020local}
Yang Zhao, Jun Zhao, Mengmeng Yang, Teng Wang, Ning Wang, Lingjuan Lyu, Dusit
  Niyato, and Kwok~Yan Lam.
\newblock Local differential privacy based federated learning for internet of
  things.
\newblock {\em arXiv preprint arXiv:2004.08856}, 2020.

\bibitem{agrawal2000privacy}
Rakesh Agrawal and Ramakrishnan Srikant.
\newblock Privacy-preserving data mining.
\newblock In {\em Proceedings of the 2000 ACM SIGMOD international conference
  on Management of data}, pages 439--450, 2000.

\bibitem{du2004privacy}
Wenliang Du, Yunghsiang~S Han, and Shigang Chen.
\newblock Privacy-preserving multivariate statistical analysis: Linear
  regression and classification.
\newblock In {\em Proceedings of the 2004 SIAM international conference on data
  mining}, pages 222--233. SIAM, 2004.

\bibitem{vaidya2002privacy}
Jaideep Vaidya and Chris Clifton.
\newblock Privacy preserving association rule mining in vertically partitioned
  data.
\newblock In {\em Proceedings of the eighth ACM SIGKDD international conference
  on Knowledge discovery and data mining}, pages 639--644, 2002.

\bibitem{vaidya2008privacy}
Jaideep Vaidya, Murat Kantarc{\i}o{\u{g}}lu, and Chris Clifton.
\newblock Privacy-preserving naive bayes classification.
\newblock {\em The VLDB Journal}, 17(4):879--898, 2008.

\bibitem{hardy2017private}
Stephen Hardy, Wilko Henecka, Hamish Ivey-Law, Richard Nock, Giorgio Patrini,
  Guillaume Smith, and Brian Thorne.
\newblock Private federated learning on vertically partitioned data via entity
  resolution and additively homomorphic encryption.
\newblock {\em arXiv preprint arXiv:1711.10677}, 2017.

\bibitem{aono2017privacy}
Yoshinori Aono, Takuya Hayashi, Lihua Wang, Shiho Moriai, et~al.
\newblock Privacy-preserving deep learning via additively homomorphic
  encryption.
\newblock {\em IEEE Transactions on Information Forensics and Security},
  13(5):1333--1345, 2017.

\bibitem{bonawitz2017practical}
Keith Bonawitz, Vladimir Ivanov, Ben Kreuter, Antonio Marcedone, H~Brendan
  McMahan, Sarvar Patel, Daniel Ramage, Aaron Segal, and Karn Seth.
\newblock Practical secure aggregation for privacy-preserving machine learning.
\newblock In {\em Proceedings of the 2017 ACM SIGSAC Conference on Computer and
  Communications Security}, pages 1175--1191, 2017.

\bibitem{mandal2018nike}
Kalikinkar Mandal, Guang Gong, and Chuyi Liu.
\newblock Nike-based fast privacy-preserving highdimensional data aggregation
  for mobile devices.
\newblock Technical report, CACR Technical Report, CACR 2018-10, University of
  Waterloo, Canada, 2018.

\bibitem{xu2019verifynet}
Guowen Xu, Hongwei Li, Sen Liu, Kan Yang, and Xiaodong Lin.
\newblock Verifynet: Secure and verifiable federated learning.
\newblock {\em IEEE Transactions on Information Forensics and Security},
  15:911--926, 2019.

\bibitem{vepakomma2018no}
Praneeth Vepakomma, Tristan Swedish, Ramesh Raskar, Otkrist Gupta, and
  Abhimanyu Dubey.
\newblock No peek: A survey of private distributed deep learning.
\newblock {\em arXiv preprint arXiv:1812.03288}, 2018.

\bibitem{saroiu2001measurement}
Stefan Saroiu, P~Krishna Gummadi, and Steven~D Gribble.
\newblock Measurement study of peer-to-peer file sharing systems.
\newblock In {\em Multimedia Computing and Networking 2002}, volume 4673, pages
  156--170. International Society for Optics and Photonics, 2001.

\bibitem{paszke2017automatic}
Adam Paszke, Sam Gross, Soumith Chintala, Gregory Chanan, Edward Yang, Zachary
  DeVito, Zeming Lin, Alban Desmaison, Luca Antiga, and Adam Lerer.
\newblock Automatic differentiation in pytorch.
\newblock 2017.

\bibitem{li2020privacy2}
Yong Li, Yipeng Zhou, Alireza Jolfaei, Dongjin Yu, Gaochao Xu, and Xi~Zheng.
\newblock Privacy-preserving federated learning framework based on chained
  secure multi-party computing.
\newblock {\em IEEE Internet of Things Journal}, 2020.

\bibitem{liu1989limited}
Dong~C Liu and Jorge Nocedal.
\newblock On the limited memory bfgs method for large scale optimization.
\newblock {\em Mathematical programming}, 45(1-3):503--528, 1989.

\bibitem{zhao2018federated}
Yue Zhao, Meng Li, Liangzhen Lai, Naveen Suda, Damon Civin, and Vikas Chandra.
\newblock Federated learning with non-iid data.
\newblock {\em arXiv preprint arXiv:1806.00582}, 2018.

\end{thebibliography}
\bibliographystyle{unsrt}
\end{document}